\pgfplotsset{
    compat=newest,
    error bars/y explicit,
    error bars/error bar style={solid},
}
\newtheorem{theorem}{Theorem}[section]
\begin{document}

\begin{frontmatter}

\title{An Easy Rejection Sampling Baseline via Gradient Refined Proposals}

\author[A,B,C,D]{\fnms{Edward}~\snm{Raff}}
\author[A]{Mark McLean}
\author[A]{James Holt}

\address[A]{Laboratory for Physical Sciences}
\address[B]{Booz Allen Hamilton}
\address[C]{Syracuse University}
\address[D]{University of Maryland, Baltimore County}

\begin{abstract}
Rejection sampling is a common tool for low dimensional problems ($d \leq 2$), often touted as an ``easy'' way to obtain valid samples from a distribution $f(\cdot)$ of interest. In practice it is non-trivial to apply, often requiring considerable mathematical effort to devise a good proposal distribution $g(\cdot)$ and select a supremum $C$. More advanced samplers require additional mathematical derivations, limitations on $f(\cdot)$, or even cross-validation, making them difficult to apply. We devise a new approximate baseline approach to rejection sampling that works with less information, requiring only a differentiable $f(\cdot)$ be specified, making it easier to use. We propose a new approach to rejection sampling by refining a parameterized proposal distribution with a loss derived from the acceptance threshold. In this manner we obtain comparable or better acceptance rates on current benchmarks by up to $7.3\times$, while requiring no extra assumptions or any derivations to use: only a differentiable $f(\cdot)$ is required. While approximate, the results are correct with high probability, and in all tests pass a distributional check. This makes our approach easy to use, reproduce, and efficacious. 
\end{abstract}

\end{frontmatter}

\section{Introduction}

Given a target distribution $f(\cdot)$ that we wish to draw samples from, rejection sampling provides one of the most popular strategies. The approach is often advertised as ``easy,'' given the simplicity of its procedure. Given a proposal distribution $g(\cdot)$ that we do know how to sample from, and a value $C$ such that $C \cdot g(x) \geq f(x) \forall x$, rejection sampling can be described succinctly in three steps:

\begin{enumerate}
\item  Generate $x \sim g(\cdot)$ and $u \sim \mathcal{U}(0, 1)$, where $\mathcal{U}$ is the continuous uniform distribution. 
\item Accept $x$ as a valid sample of $f(\cdot)$ if $u \leq \frac{f(x)}{C g(x)}$.
\item If not accepted, go back to step 1.  
\end{enumerate}

This also makes $1/C$ directly interpretable as the acceptance rate of the sampling procedure, and it guarantees independent and identically distributed (I.I.D.) samples. For this reason rejection sampling is often used in modeling applications where a physical processes is known, but produces bespoke distributions of few variables that need to be sampled ~\cite{kermiche_total_2022,ozols_quantum_2012,jewson_adjusting_2020,mitchell-wallace_natural_2017,nguyen_acceptance-rejection_2014,paskov_faster_1998,Jul2021}. Though algorithmically simple, this belays the non-trivial amount of work required by a user who wishes to draw samples. First one must perform considerable work to devise a distribution $g(\cdot)$ to sample from, find a method to determine $C$ or find a small upper bound of $C$. 

Because of this difficulty, there has been considerable work over time attempting to perform rejection sampling with greater efficacy. These newer sampling strategies often suffer from additional constraints on the target distribution $f(\cdot)$, or additional work on the user to do additional derivations. In both cases the application of a rejection sampler is non-trivial for a user, and slows progress toward prototyping, building simulations, and various down-stream tasks like HMC and Gibbs samplers that may desire to leverage sampling for an intermediate step. 

\begin{figure}[!t]
    \centering
    \adjustbox{max width=0.99\columnwidth}{%
    \begin{tikzpicture}

\definecolor{color0}{rgb}{0.12156862745098,0.466666666666667,0.705882352941177}
\definecolor{color1}{rgb}{1,0.498039215686275,0.0549019607843137}
\definecolor{color2}{rgb}{0.172549019607843,0.627450980392157,0.172549019607843}

\begin{axis}[
legend cell align={left},
legend style={fill opacity=0.8, draw opacity=1, text opacity=1, draw=white!80!black},
tick align=outside,
tick pos=left,
x grid style={white!69.0196078431373!black},
xmin=-5.75533199580941, 
xmax=5.51215866646711,
xtick style={color=black},
 legend style={font=\tiny},
y grid style={white!69.0196078431373!black},
ymin=-0.00919013883502155, ymax=0.211264547108374,
ytick style={color=black},
xlabel=Input $x$,
ylabel=Probability $\mathbb{P}(x)$,
yticklabel style={
/pgf/number format/fixed,
/pgf/number format/precision=2
},
]
\addplot [draw=white, fill=color0, forget plot, mark=*, only marks]
table{%
x  y
-2.45408063787313 0.147304544494416
-4.89255956535183 0.0266205286059788
-4.99358897164254 0.0218753976057837
-4.06725898560156 0.090339067875623
-3.82671617251779 0.113517638466833
-3.38654514308717 0.148734482418585
-5.24317332934229 0.0128922881731769
-3.49861669313258 0.14136189107296
-1.28682451616164 0.123940171285403
-0.0826488809022778 0.201054852945502
0.0853628573840893 0.200112947532958
-0.535963859326452 0.18044972528718
1.15419405383195 0.102500353110527
-0.52335572871809 0.181372353460332
-0.572710189621526 0.177686588189846
-0.512970366965474 0.182121828012391
-0.757179225204211 0.162658583433792
0.519640840808635 0.174604611021675
3.65836576443745 0.0634252190990639
3.04092561241004 0.0146345264205772
};
\addplot [semithick, color0, dotted]
table {%
-5 0.0215971299650326
-4.9 0.0262475453902436
-4.8 0.0315820439699032
-4.7 0.0376228158633442
-4.6 0.044373404297815
-4.5 0.0518150301369101
-4.4 0.0599034574899429
-4.3 0.068566704417494
-4.2 0.0777038935271634
-4.1 0.0871855016416088
-4 0.0968552049205398
-3.9 0.106533427695198
-3.8 0.116022594567489
-3.7 0.12511396348198
-3.6 0.133595792121777
-3.5 0.141262472053243
-3.4 0.147924165705566
-3.3 0.153416410653737
-3.2 0.157609121690915
-3.1 0.160414428518905
-3 0.161792836366542
-2.9 0.161757285200693
-2.8 0.160374803381672
-2.7 0.157765593591421
-2.6 0.154099540738172
-2.5 0.149590280952504
-2.4 0.144487106304141
-2.3 0.139065092217505
-2.2 0.133613917527709
-2.1 0.128425897949716
-2 0.123783773064251
-1.9 0.119948778200359
-1.8 0.117149501143732
-1.7 0.115571975507566
-1.6 0.115351403594026
-1.5 0.116565836099303
-1.4 0.119232066689655
-1.3 0.123303926974658
-1.2 0.128673090811964
-1.1 0.135172414426146
-1 0.142581748864847
-0.899999999999999 0.150636063341548
-0.8 0.159035613519234
-0.7 0.167457781780021
-0.6 0.175570113563837
-0.5 0.183043983579577
-0.399999999999999 0.189568257845136
-0.3 0.194862281656031
-0.199999999999999 0.19868752762092
-0.0999999999999996 0.200857286706416
0 0.201243879565493
0.100000000000001 0.199783001360991
0.2 0.196474982268337
0.300000000000001 0.191382935305986
0.4 0.184627957819492
0.5 0.176381736461995
0.600000000000001 0.166857062225207
0.7 0.156296878819335
0.800000000000001 0.144962555252821
0.9 0.133122087496691
1 0.121038895565054
1.1 0.108961797133785
1.2 0.0971166170839501
1.3 0.0856997472498373
1.4 0.0748738170270953
1.5 0.0647654886732916
1.6 0.0554652659403721
1.7 0.0470291147889281
1.8 0.0394816521643415
1.9 0.0328206715851879
2 0.0270228439895529
2.1 0.0220505462727784
2.2 0.0178598912505523
2.3 0.0144100895584638
2.4 0.0116741570821882
2.5 0.00965056300821204
2.6 0.0083745996639208
2.7 0.0079270753097688
2.8 0.00843663975973713
2.9 0.0100711891854397
3 0.0130141199389598
3.1 0.0174234425154325
3.2 0.0233762117541372
3.3 0.0308067779807332
3.4 0.0394533207844047
3.5 0.0488304863581338
3.6 0.0582442615728206
3.7 0.0668573107421607
3.8 0.0738000015380843
3.9 0.0783078665381089
4 0.0798553711968229
4.1 0.0782531696254852
4.2 0.0736834995954261
4.3 0.0666641881771586
4.4 0.0579507817881467
4.5 0.0484021367744795
4.6 0.038842281422721
4.7 0.0299486780397596
4.8 0.0221861475854464
4.9 0.0157912511405562
5 0.010798936662397
};
\addlegendentry{Truth}
\addplot [semithick, color1, dashed]
table {%
-5 0.0337769988002927
-4.9 0.0391890856301241
-4.8 0.0450617060761197
-4.7 0.0513536585627463
-4.6 0.0580075188277943
-4.5 0.0649501189159881
-4.4 0.0720939049502544
-4.3 0.0793391969915272
-4.2 0.0865773125904092
-4.1 0.0936944499084158
-4 0.100576162009936
-3.9 0.107112196990843
-3.8 0.113201434736859
-3.7 0.118756625212261
-3.6 0.123708628792308
-3.5 0.128009877932988
-3.4 0.131636820935071
-3.3 0.134591170024644
-3.2 0.136899852759657
-3.1 0.138613651637296
-3 0.139804604531082
-2.9 0.14056232083128
-2.8 0.140989438049592
-2.7 0.141196495576372
-2.6 0.141296532431264
-2.5 0.141399722533074
-2.4 0.141608344746345
-2.3 0.142012348305034
-2.2 0.142685721407698
-2.1 0.143683807208062
-2 0.145041643000234
-1.9 0.146773330932565
-1.8 0.148872387236503
-1.7 0.151312965797068
-1.6 0.154051813631618
-1.5 0.157030791740462
-1.4 0.160179784718457
-1.3 0.163419825163009
-1.2 0.166666272092131
-1.1 0.169831903573353
-1 0.172829809649127
-0.899999999999999 0.175575999635747
-0.8 0.177991665540352
-0.7 0.18000506877486
-0.6 0.181553039232292
-0.5 0.18258209340963
-0.399999999999999 0.183049191422907
-0.3 0.182922161702327
-0.199999999999999 0.182179827417614
-0.0999999999999996 0.180811870980369
0 0.178818473073886
0.100000000000001 0.176209761310197
0.2 0.173005101447482
0.300000000000001 0.169232261618033
0.4 0.164926477565794
0.5 0.160129444676163
0.600000000000001 0.154888260678486
0.7 0.14925434129798
0.800000000000001 0.143282329750063
0.9 0.137029019695024
1 0.130552309986424
1.1 0.123910208147548
1.2 0.117159897917605
1.3 0.110356884377843
1.4 0.103554228086482
1.5 0.0968018773426022
1.6 0.090146105211665
1.7 0.0836290553485502
1.8 0.077288398029279
1.9 0.0711570952362998
2 0.0652632712186865
2.1 0.0596301827445153
2.2 0.0542762813427077
2.3 0.0492153582455168
2.4 0.0444567615237415
2.5 0.0400056740708529
2.6 0.0358634406392884
2.7 0.0320279320470259
2.8 0.028493934927108
2.9 0.0252535559485199
3 0.0222966302474784
3.1 0.0196111248225479
3.2 0.017183528811539
3.3 0.0149992238295258
3.4 0.0130428288545351
3.5 0.0112985154536449
3.6 0.00975029040605429
3.7 0.00838224396632478
3.8 0.00717876309271695
3.9 0.00612470992188137
4 0.00520556658877074
4.1 0.00440754816285746
4.2 0.00371768599793597
4.3 0.00312388417752937
4.4 0.00261495199004527
4.5 0.00218061549948663
4.6 0.00181151130320267
4.7 0.00149916550377718
4.8 0.0012359607841986
4.9 0.00101509428031263
5 0.000830528707860075
};
\addlegendentry{MSE: $C=22.2$}
\addplot [semithick, color2]
table {%
-5 0.0323958745727326
-4.9 0.0391420064293868
-4.8 0.0467601290230525
-4.7 0.0551605669913342
-4.6 0.0641891745779988
-4.5 0.0736276290264261
-4.4 0.0832006082606963
-4.3 0.0925902014222532
-4.2 0.10145689854078
-4.1 0.109465451244952
-4 0.116312977906749
-3.9 0.121756089981962
-3.8 0.125633683912196
-3.7 0.12788243746355
-3.6 0.128542933442402
-3.5 0.127755571818367
-3.4 0.125746814635934
-3.3 0.122807597450407
-3.2 0.119266716875604
-3.1 0.115462511805566
-3 0.11171613621158
-2.9 0.108309216004522
-2.8 0.105467818929144
-2.7 0.103353624845478
-2.6 0.102062155054464
-2.5 0.10162706659072
-2.4 0.10202894796438
-2.3 0.103206806448911
-2.2 0.105070490118809
-2.1 0.107512570112466
-2 0.110418625796296
-1.9 0.113675332310429
-1.8 0.117176167773461
-1.7 0.120824884448791
-1.6 0.124537101718034
-1.5 0.128240481035566
-1.4 0.131873953648761
-1.3 0.13538641875614
-1.2 0.138735241829921
-1.1 0.141884784548519
-1 0.14480510660951
-0.899999999999999 0.147470905590037
-0.8 0.149860707586822
-0.7 0.151956287677372
-0.6 0.15374228179497
-0.5 0.15520594594447
-0.399999999999999 0.156337020513138
-0.3 0.157127663243832
-0.199999999999999 0.157572421725376
-0.0999999999999996 0.157668223446509
0 0.1574143677351
0.100000000000001 0.156812508964536
0.2 0.155866624269659
0.300000000000001 0.154582961846561
0.4 0.152969967935319
0.5 0.151038192014387
0.600000000000001 0.148800170744627
0.7 0.146270291920867
0.800000000000001 0.143464640210581
0.9 0.140400826842321
1 0.137097805687582
1.1 0.13357567838075
1.2 0.129855491254805
1.3 0.125959026942804
1.4 0.121908593511181
1.5 0.117726813954138
1.6 0.113436418792085
1.7 0.109060044385041
1.8 0.104620039398386
1.9 0.100138281648231
2 0.095636007312504
2.1 0.0911336542274227
2.2 0.0866507207036334
2.3 0.0822056409982603
2.4 0.0778156782747968
2.5 0.0734968355783807
2.6 0.0692637850554585
2.7 0.0651298153596439
2.8 0.0611067969146947
2.9 0.0572051644553313
3 0.0534339160407533
3.1 0.0498006275371427
3.2 0.0463114813963194
3.3 0.042971308419481
3.4 0.0397836410882481
3.5 0.0367507769700113
3.6 0.0338738506601066
3.7 0.031152912708295
3.8 0.0285870139895472
3.9 0.0261742940169027
4 0.0239120717545351
4.1 0.0217969375691552
4.2 0.0198248450544079
4.3 0.0179912015727454
4.4 0.0162909564791677
4.5 0.0147186861180446
4.6 0.01326867481495
4.7 0.0119349912172054
4.8 0.0107115594670551
4.9 0.00959222481775619
5 0.00857081342334995
};
\addlegendentry{Our Ratio $C=3.72$}
\end{axis}

\end{tikzpicture}
    }
    \caption{Example of estimating the true distribution $f(\cdot)$ (dotted blue) using a surrogate $g(\cdot)$ fit from a set of current samples. The MSE approach used by prior works is shown in orange (dashed).
    While MSE \textit{looks} like a good fit, it does not directly relate to the ratio $f(x)/g(x)$ that determines the acceptance rate of rejection sampling. Our approach (solid green) has a $6\times$ higher acceptance rate over the MSE based fit used by prior works.
    }
    \label{fig:example}
\end{figure}
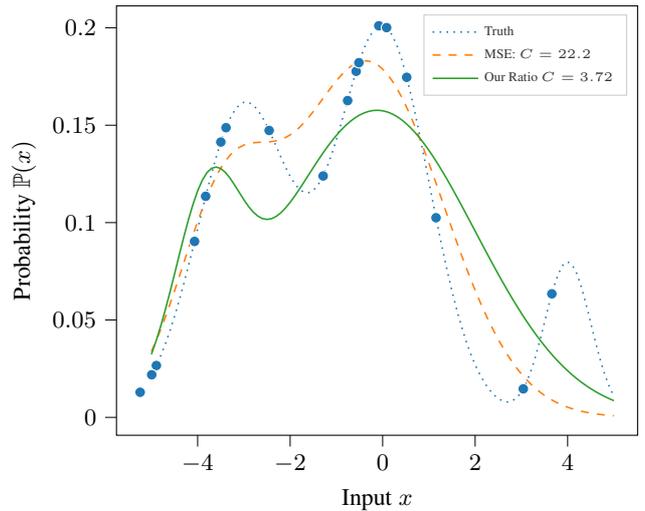

In this paper, we describe our novel Easy Rejection Sampler (ERS), where our goal is to make ERS easy for the user of rejection sampling. Rejection sampling is historically only done with differentiable functions, and so we apply modern automatic differentiation to the design of ERS. In doing so we build a system that accurately produces samples without requiring any specification of $g(\cdot)$, $C$, or any other values by the user or restrictions on $f(\cdot)$. This is obtained by leveraging work in rejection sampling that allows estimating $C$ while still producing valid samples with high probability. Since we require a proposal distribution $g(\cdot)$ that is easy to sample from, flexible, and  parameterized, we use a Gaussian Mixture Model (GMM). We exploit the differentiable approach and parameterized GMM to develop a ``refinement'' operation that maximizes the acceptance rate by directly optimizing the ratio $f(\cdot)/g(\cdot)$ instead of Mean Squared Error (MSE) based measures used by prior works. As \autoref{fig:example} shows, this can significantly reduce the true value of $C$ compared to current approaches. This refinement is run periodically as more samples are collected, allowing further minimization of $C$.

The rest of our paper is organized as follows. First we  review work related to our own in \autoref{sec:related}, including the two prior types of general purpose rejection samplers: optimization based and kernel density based. Next we will detail the design of our ERS approach in \autoref{sec:method} in three steps, presenting the primary mechanisms of ERS in the order they are used, with the final mechanism being our novel refinement based approach to proposal distributions that provides the key to higher acceptance rates over a wide array of challenging target distributions. These targets will be described in \autoref{sec:results} along with our experimental setup, followed by the results showing up to $7.3\times$ higher acceptance rates while imposing the fewest restrictions on $f(\cdot)$, and that our method is also faster by runtime estimates. These results also demonstrate the first example of obtaining improved rejection sampling runtime via the use of a GPU.

\section{Related Work} \label{sec:related}

Rejection Sampling is still widely used in multiple disciplines, often due to intrinsically low-dimensional problems or the need to perform simulations involving customized and hard to characterize integrals. This includes physics \cite{kermiche_total_2022,ozols_quantum_2012}, signal processing \cite{martino_generalized_2010}, catastrophe modeling \cite{jewson_adjusting_2020,mitchell-wallace_natural_2017}, computational finance \cite{nguyen_acceptance-rejection_2014,paskov_faster_1998}. Though these works generally use off-the-shelf rejection sampling methods or design bespoke ones, their problems are not all amenable to  current state-of-the-art samplers from machine learning, and often have high sample rates once a suitable $g(\cdot)$ and $C$ is determined. Our work will focus on challenge-problems known to produce low sample rates for current methods so that a meaningful effect is detectable.

While our method does not technically guarantee perfect sampling from the target distribution $f(\cdot)$, in all statistical tests our samples match the true distribution in every experiment. Further, approximate samples have been acceptable by users in practice\footnote{e.g., the popular \url{https://paulnorthrop.github.io/revdbayes/} uses an approximate sampler.}, and our work inherits proofs that the samples will be correct with high probability. While Hamiltonian Monte Carlo (HMC) based sampling also uses a gradient and can handle much higher dimensions than rejection sampling, it also is not as `turn key` and can require non-trivial work to ensure sample quality. Our study is focused only on rejection sampling and its standard uses: differentiable functions with $d \leq 3$ dimensions ~\cite{kermiche_total_2022,ozols_quantum_2012,jewson_adjusting_2020,mitchell-wallace_natural_2017,nguyen_acceptance-rejection_2014,paskov_faster_1998,Jul2021}. 

There is a broad family of \textit{adaptive} rejection sampling methods first proposed by \cite{Genz1992}, where the sampling acceptance rate improves for larger total number of samples $n$. This seminal work has spawned many extensions, but which generally require the specification of $g(\cdot)$ and $C$ (though they may ease the process), and more restrictive limitations on $f(\cdot)$ \cite{Gorur2011,Casella2004,Evans1998,Hormann1995}. Our work is instead focused on more general purpose rejection sampling techniques with weak or limited assumptions on $f(\cdot)$ that require little work to apply to new problems. 

There are two different families of general purpose rejection sampling algorithms available today. The first family can be described as an optimization \& sampling strategy that was first proposed with the OS* algorithm \cite{Jul2021}. Their work introduced the use of rejected samples to improve the proposal $g(\cdot)$. The latter A* introduced a gumbel trick to further improve efficacy \cite{NIPS2014_309fee4e}. However these methods require additional derivations to be done by the user, requiring a function $i(\cdot)$ and $o(\cdot)$ such that $f(x) \propto \exp (i(x)+o(x))$. Our ERS also intermixes optimization with sampling, but uses gradient based approaches. In contrast ERS will not require any proposals or bounds to be specified by the user that OS* and A* require. 

Pliable Rejection Sampling (PRS) \cite{10.5555/3045390.3045614} introduced the second strategy of using a Kernel Density Estimator (KDE) to estimate $g(\cdot)$ from the samples, a strategy refined by Nearest Neighbour Adaptive Rejection Sampling algorithm (NNARS) \cite{pmlr-v98-achddou19a} to produce a near-optimal sampler under certain assumptions. However, both methods require bounded support to estimate the KDE and have non-trivial parameters to estimate analytically, or via cross-validation. The later is particularly challenging as it requires pre-existing samples, which significantly complicates practical usage. In contrast we impose no constraints on $f(\cdot)$ and require no other items to be specified by the user, and we use a GMM so that we may modify a reasonable number of parameters via gradient descent. Though PRS and NNARS are most similar to ERS in terms of $g(\cdot)$, they optimize a form of MSE to approximate $f(\cdot)$ where our novel ratio optimization provides considerably high acceptance rates.

We make note that the considerable requirements to use modern rejection samplers create reproducibility risks. The latter cited works often depend on transformations of the sampling distribution in order to convert unbounded support into finite support (i.e, changing the domain $\mathcal{X}$ from $[0, \infty]$ to $[0, 1]$) but do not specify what transformation is used, and do not specify many derivations.  NNARS relies on a Holder constant $H$ and associated $s \in [0, 1]$ such that $|f(x)-f(y)| \leq H\|x-y\|_{\infty}^{s}, \forall x, y \in \mathcal{X}$, but the $H$ values are no stated for any experiments\footnote{Code for the method is available, not the experiments, and we were unable to replicate their results in our attempt.}. Similarly, PRS requires a cross-validation procedure with no further details. This missing information has been identified by many prior works as a significant barrier to replication \cite{Dodge2019,Gundersen2018,Musgrave2020,Raff2019_quantify_repro,Raff2020c,10.1145/3589806.3600041,10.1145/3589806.3600042}. For these reasons, we use the results as presented in these prior works, but note an additional benefit of our approach: by requiring no hyper-parameters, transformations or other constraints to be specified, the ability to reproduce works leveraging ERS is increased. Further, we provide source code at \url{https://github.com/NeuromorphicComputationResearchProgram/EasyRejectionSampling}. 

\section{Method} \label{sec:method}

Our primary goal is to design a mechanism by which a user is not required to specify anything other than the function $f(\cdot)$, and optionally a domain $\mathcal{X}$ that they wish to draw samples from $x \sim f(\cdot) \in \mathcal{X}$. This is a more challenging problem than that addressed by prior methods, as the user, by specifying both the surrogate function $g(\cdot)$ and the ratio $C = \sup_{z \in \mathcal{X}} \frac{f(z)}{g(z)}$, is imposing a prior knowledge to the sampling method. 

To build our approach, we must define a surrogate function $g(\cdot)$ and somehow determine the maximal ratio $C$ during runtime. We tackle the latter by leveraging the little known method of ``Empirical Supremum'' based sampling \cite{Caffo2002}. 
Caffo et al.
showed that by initializing an estimate of the supremum $\hat{C}_t = 1$, and then adjusting it to $\hat{C}_{t+1} = \max\left(\hat{C}_t, \frac{f(x_{t+1})}{g(x_{t+1})}\right)$, 
that there will be a finite number of errors when the support $\mathcal{X}$ is discrete,
with a similar result for unbounded support. However, \cite{Caffo2002} still required the specification of $g(\cdot)$. We note that the Empirical Supremum was found to converge in under 100 iterations. We leverage this by performing a minimum of 500 samples per iteration, using the maximum across all samples before deciding an accept/reject decision. In practice we find that this results in  convergence within $0.001$ in one iteration, resulting in no difference between using $\hat{C}$ and the true $C$ for our experiments. 

Understanding that we will safely estimate $C$ as the sampling runs allows us to specify the overall strategy of our approach. The description is subdivided into three critical design choices. Together, they will form our ``Easy Rejection Sampling'' approach.
\begin{enumerate}
    \item We will use a batched iterated sampling procedure to alter the candidate distribution $g(\cdot)$ via a Gaussian Mixture Model after successive rounds of sampling. 
    \item We will specify a simple gradient based strategy for selecting an initial candidate distribution $g(\cdot)$ with few evaluations of $f(x)$.
    \item We develop an approach to \textit{refining} the distribution $g(\cdot)$ to \textit{maximize acceptance rate}, where prior work has instead focused on maximizing the overall similarity of the distributions. 
\end{enumerate}

We note that the third refinement step, where we develop a loss function that targets the ratio $f(\cdot)/g(\cdot)$ directly, is critical to the success of our method. It is placed at the end due to readability, and it is the last step of each iteration. This is necessary as we use both accepted and rejected samples to inform the refinement, so running the refinement regularly can further improve the results by reducing $C$ further.

Careful design allows us to perform significantly better than the theoretically optimal (in only a min-max sense\footnote{Their proof is with respect only to families of the  H\"{o}lder density, in worst-case situations. This does not inform a limit on the rejection rate for non-worst-case densities within that family, let alone those beyond it. For example, densities with infinite support already are outside the scope of these proofs. This is not to diminish the value of their theoretical contributions, but to elaborate that significant room for improvement is still possible.}) NNARS algorithm while simultaneously requiring fewer constraints and no extra information from the user. This shows that the constant factors in NNARS are non-trivial, and full understanding of the limits of adaptive rejection sampling is not yet known.   Our method further inherits the correctness results from  \cite{Caffo2002}, and additional empirical tests show no detectable difference in our sample quality from the true distribution.

Each of the following sub-sections match our stated design choices, and will detail the relevant approach with the justification for why the approach was taken. We note that, as a matter of engineering, the below procedures currently include hard-coded hyper parameters that we do not tune. Their purpose is simple and intuitive (e.g., increase a value by some small amount so we are not comparing to a worst case), and do not require complex math like the Holder constant used by NNARS. In extended testing we found that these coefficients did not have a meaningful impact on the results. 

Our implementation is done in JAX~\cite{jax2018github} and works on $\log(f(\cdot))$ and $\log(g(\cdot))$ in order to be numerically stable. Our description will remain at the $f(\cdot)$ and $g(\cdot)$ level for clarity. 

\subsection{Easy Rejection Sampling Algorithm}

The overall procedure of our approach can be succinctly described as fitting a GMM to the current data as the proposal distribution $g(x)$, sampling, and then re-fitting the proposal $g(x)$ at various intervals. Our use of vectorized operations makes the multiple fittings of $g(x)$ computationally reasonable, and when factoring out differences in acceptance rate is $\geq 2\times$ faster on CPU, and shows the first speedup for a rejection sampler on a GPU at $4\times$. 

Our approach is shown in pseudo-code in Algorithm \ref{algo:ers}, where $A$ and $R$ contain the set of currently accepted and rejected samples, with their evaluations against $f(\cdot)$ cached for posterity. The $K$ current mean, covariance, and weights of the GMM are $\bm{\mu}_{1, \ldots, K}, \Sigma_{1, \ldots, K},$ and $\bm{w}$ respectively; the subscript ${1, \ldots, K}$ will be dropped if the value of $K$ is not changing. Our two primary operations are fitting a GMM (initialized with $k$-means++ \cite{Arthur2007,Raff_2021_kmeans}), and a ``refine'' step that we discuss in \autoref{sec:refinement}.

\begin{algorithm}[!h]
\caption{Easy Rejection Sampling} \label{algo:ers}
\begin{algorithmic}[1]
\Require Target function to sample $f(\cdot)$, limited to a domain $\mathcal{X}$. A target number of samples to draw $T$ 
    \State $\bm{\mu}_{1, \ldots, K}, \Sigma_{1, \ldots, K}, \bm{w} \gets $ \Call{Initialize}{$f$, $\mathcal{X}$} \Comment{$g(x)$ is short for $\sum_{i=1}^K w_i \cdot \mathcal{N}_\mathcal{X}(\bm{\mu}_i, \Sigma_i) $, see \autoref{algo:init}. Note: $K$ is the number of initial means determined by the Initialize function. }
    \State $A \gets []$ \Comment{Accepted samples}
    \State $R \gets []$ \Comment{Rejected samples}
    \State $\hat{C} \gets -\infty$, $C_\mathit{low} \gets \infty$
    \State refine $\gets$ \textit{False}
    \While{$|A| < T$} 
         \State $X^{n \cdot \log(K+1), T} \sim g(\cdot)$ \Comment{Draw $n \cdot \log(K+1)$ candidate samples}
         \State $\tilde{C} \gets \max_i\left( \frac{f(x_i)}{g(x_i)} \right)$ \Comment{Batch supremum}
         \State refine $\gets \tilde{C} > \hat{C} \lor \tilde{C} > C_{\mathit{low}}$
         \State $\hat{C} \gets \max(\hat{C}, \tilde{C})$ \Comment{Empirical supremum}
         \State $C_{\mathit{low}} \gets \min(C_{\mathit{low}}\cdot1.05, \tilde{C})$
         \State $u_i \sim \mathcal{U}(0, 1)$
         \State Accept to $A$ all $u_i \leq \frac{f(x_i)}{\hat{C} \cdot g(x_i)}$, all rejected to $R$ \Comment{$f(x_i)$ is cached for \textit{all} samples}
        \State $\bm{\tilde{\mu}}, \tilde{\Sigma}, \bm{\tilde{w}} \gets \bm{\mu}, \Sigma, \bm{w} $
         \If{Last GMM call was more than $|A|/1.5$ acceptances ago or $\log(|A|) > 2\cdot K$}
            \State $\bm{\mu}_{1, \ldots, K'}, \Sigma_{1, \ldots, K'}, \bm{w} \gets $ fit GMM to $A \cup R$ with $K' = \min(\log_2(|A|), |A|/(d \cdot 15))$ clusters
            \State refine $\gets$ \textit{True}
         \EndIf
         \If{refine}
            \State $\bm{\mu}, \Sigma, \bm{w} \gets $ \Call{Refine}{$\bm{\mu}$, $\Sigma$, $\bm{w}$, $A$, $R$} \Comment{Refine initial and GMM, taking the best (lowest $\bar{C}$), see \autoref{algo:refine}}
            \State $\bar{C} \gets \max_i\left(\frac{f(\bm{x}_i)}{g(\bm{x}_i)}\right), \quad \forall \bm{x}_i \in A \cup R$ 
            \If{$\bar{C} \leq \hat{C}$}
                \State $\hat{C} \gets \bar{C}$\Comment{We keep current refinement}
            \Else
                \State $ \bm{\mu}, \Sigma, \bm{w} \gets \bm{\tilde{\mu}}, \tilde{\Sigma}, \bm{\tilde{w}} $ 
                \Comment{We reject current refinement, and revert to previous model.}
            \EndIf
            \State refine $\gets$ \textit{False}
         \EndIf
    \EndWhile
\end{algorithmic}
\end{algorithm}

We sample $n=500$ items at a time, and increase the value multiplicatively with the log of the number of components $K$ in the mixture to ensure sampling is computationally effective. We note on line 4 that we initialize the empirical supremum $\hat{C}$ to $-\infty$ instead of 1, because it allows our approach to work with both unnormalized $f(\cdot)$ and unnormalized $g(\cdot)$. The former reduces the complexity of the implementation, especially for intractable normalization terms that would require estimation. The latter helps with bounded support $\mathcal{X}$, avoiding more costly and complex evaluations of $g(\cdot)$. This is possible because $\hat{C}$ is simply determining the maximum observed ratio, and so any missing normalizing terms in either function roll into $\hat{C}$ multiplicatively (i.e, $\frac{1}{\hat{C}} \cdot \frac{f(\cdot)/z_1}{g(\cdot)/z_2} = \frac{z_2}{z_1 \hat{C}} \frac{f(\cdot)}{g(\cdot)}$ ).  

Lines 7-13 perform the rejection sampling, where $\tilde{C}$ is the \textit{batch} supremum, which in most all cases converges (or closely approaches convergence) in a single iteration.  We also keep track of $C_{\mathit{low}}$, a running estimate of the lowest batch supremum seen inflated by 5\% per round. If the current batch supremum is above $C_{\mathit{low}}$, we flag the model for refinement. While violating $C_{\mathit{low}}$ does not change the empirical supremum $\hat{C}$, it is an indicator that we may have acquired new---relatively harder---samples that will benefit the refinement optimization later.  A refinement is also done if the bound $\hat{C}$ occurs, but this is rare and usually occurs due to samples occurring out in the tail of the distribution (e.g., $f(x)/g(x)$ \textit{could} be larger for an $x$ such that $\max(f(x), g(x)) \lessapprox
1/n$). 

A \textit{potential} GMM is fit whenever the number of accepted samples has increased by 50\%, to avoid excessive training of GMMs. We do not require fitting an accurate distribution, but rather one that can be refined to a maximal acceptance rate, and do not require finding the ``optimal'' number of mixtures. As such we use the standard expectation maximization approach with diagonal covariance, and simply make the number of  mixtures in the new GMM grow logarithmically with the accepted sample size. The diagonal covariance is used because it is faster to sample from by a significant margin, especially when a compact support $\mathcal{X}$ is given where samples can be drawn in $\mathcal{O}(1)$ time, as the general case for an arbitrary covariance is highly non-trivial~\cite{Botev2017}.  We use $\mathcal{N}_\mathcal{X}$ to denote the normal distribution truncated to the domain of $\mathcal{X}$, which is user specified (e.g., a function in the domain $[0, \infty]$). 

\textit{Why not full rank covariance?} Using a full-rank $\Sigma$ is problematic when we have compact support $\mathcal{X}$ (i.e., $x \in [0, \infty]$) because there is no closed form method of sampling from the truncated distribution $\mathcal{N}_\mathcal{X}$, which thus requires its own sampling scheme to drawn from~\cite{Wilhelm2010}. This is extremely expensive, and we found that the Gaussian clusters that occur at the edge of the support $\mathcal{X}$ are very challenging to sample from and would increase run-time by $10,000\times$ just due to the cost of sampling from the resulting $g(\cdot)$. Thus diagonal $\Sigma$ is preferred due to exact and fast sampling from it's truncated distribution. 

A key optimization is that the GMM is fit to both accepted and rejected samples, by weighting each data point $x_i$ by its true evaluation $f(x_i)$. This is essentially free, as we cache all calls to $f(\cdot)$, and requires at most doubling the memory use. Heuristically, we multiply the weight of accepted samples by a factor of 10 to reflect their greater importance to the underlying distribution, which allows our method to still work even when initial sampling rates are low. These factors all occur in  lines 15-17. 

Finally, we perform the refinement on lines 18-24. This is a non-convex problem, and so does not always succeed. If a GMM was attempted on lines 15-17, both the current model and the candidate GMM will be refined. We can compute the empirical supremum using the cached $f(\cdot)$ values again, to determine if the refinement has provided a new distribution $g(\cdot)$ that is better than prior solutions. 

We can formally show the proofs of \cite{Caffo2002} still apply, using their notation:
\begin{theorem} \autoref{algo:ers}, given a fixed $g(\cdot)$, and a sequence of $i$ samples draw thus far, converges to the same or better (fewer false samples) solution as \cite{Caffo2002}, and thus retains $\mathcal{O}(i^{-1})$ convergence rate of correctness. 
\end{theorem}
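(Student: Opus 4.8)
The plan is to couple Algorithm~\ref{algo:ers}, restricted to a fixed proposal $g(\cdot)$, with the empirical-supremum rejection sampler of~\cite{Caffo2002} on a shared randomness source, and to show that at every accept/reject decision our acceptance probability is sandwiched between the ideal one and theirs. Fix $C=\sup_{z\in\mathcal X} f(z)/g(z)$ (finite, as required for $g$ to be a valid proposal). First I would record what Algorithm~\ref{algo:ers} actually does once $g(\cdot)$ is frozen: the GMM refit on lines~15--17 cannot fire (it would change $g$), the \textsc{Refine} call on line~19 returns its input unchanged, and the recomputation $\bar C=\max_i f(x_i)/g(x_i)$ over $A\cup R$ on line~20 ranges over \emph{all} previously drawn points, so (the $f(x_i)$ being cached and $g$ unchanged) it equals the running maximum already stored in $\hat C$; hence the test on lines~21--22 leaves $\hat C$ unchanged and cannot shrink it. The variables $C_{\mathit{low}}$ and the \emph{refine} flag only schedule refinements and never enter a comparison against $u_i$, so they are irrelevant to correctness. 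Thus, with $g$ fixed, the algorithm is exactly \emph{batched} empirical-supremum sampling: it draws a batch $x_{1},\dots,x_{m}$ (with $m=n\log(K{+}1)$ constant since $K$ is), updates $\hat C\leftarrow\max(\hat C,\max_{k\le m} f(x_k)/g(x_k))$, then tests each $x_j$ by $u_j\le f(x_j)/(\hat C\,g(x_j))$, and likewise for later batches. Note $\hat C$ is nondecreasing and, being a maximum over finitely many observed ratios, always satisfies $\hat C\le C$; the initialization $\hat C=-\infty$ is merely the trivial (always valid) lower bound on $C$ used in place of Caffo et al.'s initial guess of $1$, and their analysis only requires the initial estimate to be $\le C$.

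Next I would set up the coupling: run both samplers on the same proposal stream $x_1,x_2,\dots$ and the same uniforms $u_1,u_2,\dots$. Let $\hat c_j=\max_{k\le j} f(x_k)/g(x_k)$ be Caffo et al.'s estimate when $x_j$ is tested (their initialization only helps them) and $\hat C_j$ ours. If $x_j$ falls in our $b$-th batch, then $\hat C_j=\max_{k\le bm} f(x_k)/g(x_k)\ge\max_{k\le j}f(x_k)/g(x_k)\ge\hat c_j$, since $bm\ge j$: batching can only raise the supremum estimate at a given decision, never lower it. Writing $p^\star_j=\min\{1,\,f(x_j)/(C\,g(x_j))\}$, $p_j$, and $q_j$ for the ideal, our, and Caffo et al.'s acceptance probabilities, I get $p^\star_j\le p_j\le q_j$ pointwise along every realization of the proposal stream ($p^\star_j\le p_j$ from $\hat C_j\le C$, and $p_j\le q_j$ from $\hat C_j\ge\hat c_j$); crucially all three quantities are measurable functions of $x_1,x_2,\dots$ alone and independent of the $u_i$, because neither $\hat C$ nor which batch contains $x_j$ depends on which earlier draws were accepted.

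Finally I would convert the sandwich into an error-count comparison. Call $x_j$ a \emph{false sample} if the sampler accepts it but an ideal rejection sampler using $C$ and the same $u_j$ would reject it; conditioning on the proposal stream, this has probability $p_j-p^\star_j$ for our algorithm and $q_j-p^\star_j$ for Caffo et al.'s, and $0\le p_j-p^\star_j\le q_j-p^\star_j$ by the sandwich. Summing over the first $i$ draws gives $\mathbb E[\#\text{false samples by step }i]\le\mathbb E_{\text{Caffo}}[\#\text{false samples by step }i]$, so Algorithm~\ref{algo:ers} produces no more false samples than~\cite{Caffo2002}; since they established the $\mathcal O(i^{-1})$ rate for the right-hand side (for any valid initial estimate, in particular the trivial one used here), that rate is inherited, which is the claim. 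The main obstacle is not a hard estimate but the bookkeeping of the first step: rigorously verifying that, under the ``fixed $g(\cdot)$'' hypothesis, neither the refinement block (in particular the line~21--22 update, which at face value rewrites $\hat C$) nor the $C_{\mathit{low}}$/\emph{refine} machinery can perturb the accept/reject rule, and confirming that Caffo et al.'s convergence argument is insensitive to the $-\infty$ initialization used here.
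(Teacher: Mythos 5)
Your proposal is correct and follows essentially the same route as the paper: couple the batched sampler with Caffo et al.'s sequential one on shared draws, use monotonicity of the empirical supremum to show the batch estimate at each decision dominates the sequential estimate, and then defer to their correctness result. The paper phrases this via stopping-time indices $\tau^B_i$ versus $\widetilde{\tau}_i$ rather than your acceptance-probability sandwich and false-sample count, but the key lemma and conclusion are the same (your treatment of the line~19--22 bookkeeping and the $-\infty$ initialization is simply more explicit than the paper's).
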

\begin{proof}
Let $\widetilde{\tau}_{i}=\min \left\{j \in \mathbb{N} \mid U_{i j} \leqslant \frac{f\left(X_{i j}\right)}{\widehat{C}_{i} g\left(X_{i j}\right)}\right\}$ define the sequence of samples drawn from $g(\cdot)$ that define the index of the $i$'th sample $X_{i, \widetilde{\tau}_i}$ sampled by Empirical Rejection Sampling, and $\tau_i$ the result from using the true supremum $C$. If samples are selected $B$ at a time by \autoref{algo:ers}, then the sampled index $\tau^B_i  = \min \left\{j \in \mathbb{N} \mid U_{i j} \leqslant \frac{f\left(X_{i j}\right)}{\widehat{C}_{i + (B-i \mod B)} g\left(X_{i j}\right)}\right\}$. By definition $\widehat{C}_{i} \leq \widehat{C}_{i+1}$, and so a simple recurrence shows that $\widehat{C}_i \leq \widehat{C}_{i + (B-i \mod B)}$. Thus it must be the case that $\tau^B_i \leq \widetilde{\tau_i}$. Since $\widetilde{\tau_i}$ controls the acceptance of samples and forms the proof of correctness for \cite{Caffo2002}, then \autoref{algo:ers} also satisfies the proof for a fixed $g(\cdot)$. 
\end{proof}

The proof that you can alter $g(\cdot)$ is of the same form by ``restarting'' the sequence when $g(\cdot)$ changes, and using the previous $X_{i,j}$ values to pick an initial $\widehat{C}_1$ that is valid (true by definition, as its the maximum ratio of $f(\cdot)/g(\cdot)$ observed so far), and beginning a new convergence of rate $\mathcal{O}(i^{-1})$ at the warm-started solution $\widehat{C}_1$. 

This proves that our \autoref{algo:ers} will converge to a solution of reasonable quality, it \textit{does not guarantee that no erroneous samples will occur}. The probability of this can be described, but not easily quantified, as the likelihood the $i$'th sample $x_i$ being drawn incorrectly is the situation that $\frac{f(x_i)}{C g(x_i)} < u_i < \frac{f(x_i)}{ \widehat{C}_{i + (B-i \mod B)} g(x_i)}$, which simplifies to answering the probability that $\mathbb{P}\left(C < \frac{ \widehat{C}_{i + (B-i \mod B)} f(x_i)}{f(x_i) -  \widehat{C}_{i + (B-i \mod B)} g(x_i) u_i} \right)$. This is hard to quantify due to the joint dependence on a future iteration's estimate of $\hat{C}$ (because we get samples in batches), the curvature of $f(\cdot)$ and $g(\cdot)$, and the uniform random value of $u_i$. We instead use the final value of $\hat{C}$ to check that empirically, we do not appear to have falsely accepted any samples. All of our experiments passed this test. 

\subsection{Initial Proposal Distribution} \label{sec:init}

Now that we have specified the overall iterative strategy of our rejection sampler, we specify how the initial distribution $g(\cdot)$ is chosen. Congruent with standard practice, we aim for our initial proposal to be wider than the underlying distribution $f(\cdot)$, and let subsequent iterations of our algorithm narrow the proposal once samples have been obtained. Note that we count all $f(\cdot)$ evaluations in this stage against the total number of calls for computing sample acceptance rate, so it is important that we find a reasonable choice with a limited number of $f(\cdot)$ evaluations.

The procedure is outlined in \autoref{algo:init}, and contains three primary steps: 1) handling finite support, 2) apparently unimodal distributions, 3) multi-modal distributions. 

First, if the distribution was indicated to have a bounded support, we simply set $\bm{\mu}$ to have the center of the min/max bounds, and set the covariance to be wide enough to cover the entire space. This occurs on lines 3-7, and handles the finite support case. 
On lines 8-9 we have potentially infinite locations for the distribution, and so use random sampling to find a location that has a non-zero probability. This provides greater flexibility for our approach. 

Once an initial point is selected, on lines 10-13 we sample a small number of points in the space around the initial point, and then use Stochastic Gradient Descent (SGD) to maximize the value of $f(\cdot)$. Because we are trying to empirically find some number of modes of $f(\cdot)$, we use fast converging FISTA~\cite{BECK2003167} to quickly reach local maxima as implented in jaxopt~\cite{Blondel2021a}. These modes are used to determine whether or not the distribution is multi-modal. 

\begin{algorithm}[H]
\caption{Initialization of first proposal distribution $g(x) = \sum_{i=1}^K w_i \cdot \mathcal{N}(x | \mu_i, \Sigma_i) $}  \label{algo:init}
\begin{algorithmic}[1]
\Procedure{Initialize}{$f$, $\mathcal{X}$}       
    \State $x \gets \vec{0}$ 
    \If{$\mathcal{X}$ is compact} %
        \State $\mu_1 \gets $\Call{Center}{$\mathcal{X}$} \Comment{Place $g(x)$ at the center.}
        \State $\Sigma_1 \gets $\Call{Range}{$\mathcal{X}$}/3 \Comment{Set radius to cover the whole space with 3 $\sigma$}
        \State $w \gets [1]$ \Comment{Uni-modal}
        \State \Return $\mu_1, \Sigma_1, \bm{w}$
    \EndIf

    \While{$f(x) = 0$}  \Comment{Maybe discontinuous?}
        \State $x \sim \mathcal{N}(0, I)$
    \EndWhile  
    \For{$i \in [1, d+3]$}
        \State $\mathit{modes}_i \gets x + \sim \mathcal{N}_\mathcal{X}(0, I)$
        \State $\mathit{modes}_{d+4} \gets x$
    \EndFor
    \State Run SGD on $\mathit{modes}$ to minimize $\sum_i -f(\mathit{modes}_i)$
    
    \If{Covariance of $\mathit{modes}_{\ldots} < \epsilon$  } \Comment{We need to estimate a covariance}
        \State $\mu_1 \gets \frac{1}{K}\sum_i^K \mathit{modes}_i$
        \For{$i \in [1, d\cdot 2+10]$}
            \State $\mathit{spread}_i \gets x + \sim \mathcal{N}_\mathcal{X}(0, I)$
        \EndFor
        \State Run SGA on $\sum_{i} \log\left(f(x)-\mathit{spread}_i-5\right)^2$
        \State $\Sigma_1 \gets $\Call{cov}{$\mathit{spread}_{\ldots}$}
        \State $w \gets [1]$
        \State \Return $\mu_1, \Sigma_1, \bm{w}$
    \Else \Comment{Estimate multi-modal coverage}
        \State Select $K$ values $\mu_1, \mu_2, \ldots, \mu_K$ from $\mathit{modes}$ via k-farthest selection, stopping when $\|\mu_{K} - \mu_{K+1}\|_2 < \epsilon$
        \State Set $\Sigma_i \gets I\cdot(\max_{j,z} \|\mu_j-\mu_z\|_2/K), \quad \forall i \in [1, K]$
        \State $\bm{w} \gets \vec{1}/K$
        \State \Return $\mu_{\ldots}, \Sigma_{\ldots}, \bm{w}$
    \EndIf
\EndProcedure
\end{algorithmic}
\end{algorithm}

We check for unimodality by checking the covariance of the found modes, and if smaller than some $\epsilon$, that means all points are on top of each other and thus the distribution appears unimodal (if that was an erroneous decision, it will eventually be corrected by the sampling and GMM refits). To estimate a covariance matrix, we again use SGD to find points that have a different in log probability of $-5$ lower, which is many orders of magnitude smaller. The covariance is estimated from these points, which---being overly far from the mode of the distribution---will have heavier tails than $f(\cdot)$ and thus allow good sampling coverage. 

If the covariance is non-zero, we begin selecting the $2$ farthest pairwise points, and iteratively looking at the next point that is farthest from the current set until the distance becomes $< \epsilon$, indicating that we are not selecting any new modes. This gives us a set of $K$ modes, and we use the maximum pairwise distance divided by the number of modes as the covariance for all modes. This again provides an overestimate of the true covariance and thus helps ensure coverage. 

Combined, these give us a strategy for selecting the initial distribution $g(\cdot)$ that will quickly identify empirical maximum ratios $f(\cdot)/g(\cdot)$, and allow fast convergence of our methods.

\subsection{Refinement} \label{sec:refinement}

The final, and most significant improvement of our method is the refinement operation, where we alter the GMM to improve the acceptance rate of samples drawn. Our key insight is that the kernel density estimates used by NNARS and PRS are attempting to minimize losses of the form $\int_\mathcal{X} \left|f(x) -  g(x)\right|^p dx$. This is intuitive: the closer $f(\cdot)$ and $g(\cdot)$ are, the higher the acceptance rate will be. Yet, this is not what the core rejection sampling procedure addresses. Instead we can seek to refine the initial proposal model $g(\cdot)$ to minimize $\sup_{z \in \mathcal{X}} \frac{f(z)}{g(z)}$ in a direct fashion, such that we should expect high acceptance rates. 

\begin{algorithm}[H]
\caption{Refinement of $\bm{\mu}_{1, \ldots, K}, \Sigma_{1, \ldots, K}, \bm{w}$ using accepted and rejected samples $A$ and $R$} \label{algo:refine}
\begin{algorithmic}[1]
\Procedure{$\ell$}{$\bm{\mu}$, $\Sigma$, $\bm{w}$, $A$, $R$ }  
    \State $\alpha_i \gets \log(f(\bm{x}_i)/g(\bm{x}_i)), \quad \forall \bm{x}_i \in A$
    \State \Return $\langle$\Call{softmax}{$\bm{\alpha}$}, $\bm{\alpha}\rangle$
\EndProcedure
\While{Max iterations not reached}
    \State Alter $\bm{\mu}$, $\Sigma$, and $\bm{W}$ using $\frac{\partial \ell}{\partial \bm{\mu}}$, $\frac{\partial \ell}{\partial \Sigma}$, and  $\frac{\partial \ell}{\partial \bm{w}}$ via automatic differentiation 
    \If{Current solution is $\leq$ than $\hat{C}$}
        \State \Return $\bm{\mu}$, $\Sigma$, $\bm{w}$
    \EndIf
\EndWhile
\State \Return original  $\bm{\mu}$, $\Sigma$, $\bm{w}$
\end{algorithmic}
\end{algorithm}

Our approach to doing so is simple and detailed in \autoref{algo:refine}. We use the existing samples to compute the log ratio $\alpha_i = \log(f(\bm{x}_i))-\log(g(\bm{x}_i))$, giving us a vector of empirical results. While we could select the maximal $\alpha_i$ to use as the loss to minimize, this is undesirable, as many $\alpha_i$ may be large and the optimization will take longer (especially when many mixture components $K$ exist, using the maximum will generally influence only one of $K$ components). Instead we use the approximate maximum computed by $\text{SOFTMAX}(\bm{\alpha})^\top \bm{\alpha}$, which allows a better behaved gradient to impact multiple components $K$ in proportion to their log-ratios. 
Optimizing $\text{SOFTMAX}(\bm{\alpha})^\top \bm{\alpha}$ still directly tackles the sampling ratio $f(\cdot)/g(\cdot)$, and so results in higher acceptance rates via lower $C$ values. Crucially this requires no further calls to $f(\cdot)$ because it is performed only on the current samples (accepted and rejected) which have already cached $f(\cdot)$ values.  

This approach may be seen as an adaption of the strategy used for gumbel softmax sampling trick of \cite{Jang_Gu_Poole_2017}, but instead of reparameterizing a target function we are creating a biased approximation that is advantageous in practice. 
In our context, the ``correct'' target function is to replace line 3 of algorithm 3 with $\max_{\forall i} \alpha_i$. This uses the maximal value (of $\frac{f(\cdot)}{g(\cdot)}$), but also means that $\forall j \neq i$, the gradient through $\alpha_j$ is exactly equal to 0 (i.e., the max operation returns a non-zero gradient only to the index that was selected).
This is problematic when the ratio $\frac{f(\cdot)}{g(\cdot)}$ is multimodal, because only one mode of $g(\cdot)$ will get a meaningful gradient because only one point $\boldsymbol{x}_i$ contributed to the final loss calculation (via $\alpha_i$). By using the softmax each mode of $\frac{f(\cdot)}{g(\cdot)}$ we get a gradient proportional to its scale, and so progress over the whole domain is made instead of just one location. Thus we replace the target function of interest with a \textit{biased} approximation (i.e., Softmax($\boldsymbol{\alpha})^\top \boldsymbol{\alpha}$ $\neq \max_i \alpha_i$) because it results in better learning behavior. In comparison, \cite{Jang_Gu_Poole_2017} need a well-defined gradient through a stochastic function, and so require the softmax for their approach to define that gradient that is unbiased. But we both use the softmax to get a well-behaved gradient. 

Because this is a non-convex optimization problem, we use the AdaBelief\cite{zhuang2020adabelief} optimizer with an initial learning rate of 0.1, and perform 800 total gradient update steps per refinement call. We check the quality of the solution at 100, 200, 400, and 800 steps and select the best (lowest maximum ratio). Efficacy could be improved by checking at every iteration, but is not advantageous from a computational perspective due to the interpreter overheads of Python. 

\section{Experiments and Results} \label{sec:results}

Having defined our approach, we first specify our experiments, followed by results showing that ERS is competitive with or significantly better than prior methods while simultaneously requiring less information from the user. We will use several standard benchmarks used for rejection sampling problems that each exercise a different kind of challenge for which a sampler may suffer low acceptance rates. The primary measure of interest is the acceptance rate, which indicates what percentage of samples $g(\cdot)$ are accepted divided by the total number of evaluations of $f(\cdot)$. This means the gradient descent operations used in \autoref{sec:init} reduce the total rate of ESR. The gradient evaluations in \autoref{sec:refinement} do not count because the values of $f(\bm{x}_i)$ are cached during the initial sample acceptance/rejection evaluation, so no additional evaluations of $f(\cdot)$ are needed. Consistent with prior work, we use $n=10^5$ target accepted samples and 10 runs to compute the mean and standard deviation of results. In all cases we report the NNARS, PRS, A*, and OS* results from prior works as we were unable to replicate their success. 

Three primary and challenging problems are used as the target function $f(\cdot)$. The first is the ``peakiness'' problem proposed by \cite{NIPS2014_309fee4e} in \autoref{eq:peakiness}, where $a$ controls how ``peaky'' the distribution is and the domain $\mathcal{X} \in (0, \infty)$. As $a \to \infty$ the peakiness gets higher, making sampling more difficult. 

\begin{equation} \label{eq:peakiness}
    f(x) \propto \frac{e^{-x}}{(1+x)^{a}}
\end{equation}

The next problem tests the impact of scaling the dimension size of the problem and is given for a general $d$-dimensional distribution in \autoref{eq:scaling} as proposed by \cite{10.5555/3045390.3045614}. Here the support is compact with $\mathcal{X} \in [0, 1]$. We note that this distribution is highly multi-modal, making it especially challenging. Most rejection sampling focuses on one-dimensional problems due to the difficulty of specifying a useful $g(\cdot)$ and $C$. 

\begin{equation} \label{eq:scaling}
f(\bm{x}) \propto \prod_{i=1}^d \left(1+\sin \left(4 \pi x_i -\frac{\pi}{2}\right)\right)
\end{equation}

The last test we consider is the ``clutter'' problem first posed by \cite{Minka2001}, and is given in \autoref{eq:clutter}. Here $\theta$ indicates a set of centers that are selected from 10 points spaced uniformly in the range of $[-5, -3]$ and $[2, 4]$. This creates a distribution with two very strong peaks that are separated by a far distance, and has been a challenging sampling distribution for over two decades \cite{Minka2001}. 

\begin{multline} 
\label{eq:clutter}
    f(\boldsymbol{x})\propto \prod_{i=1}^N r \left((2 \pi)^{-d/2} e^{-(\bm{x} - \theta_i)^2/2}\right) \\ +  (1-r) \left((2 \pi)^{-d/2} e^{-(\bm{x}^2/100^2)/2} 100^{-d}\right)
\end{multline}

For all tests where $d \leq 2$ we ran a two-sample Kolmogorov–Smirnov (KS) test comparing ERS's samples with those of $A*$ as implemented in the original code, or with the true distribution. In no case was a difference in ERS's samples and the target distribution detected. Because the KS test is not well defined for $d > 2$, in these situations we used a  two-sample Carmer test, which also found no significant difference. 

We note that reproducibility of NNARS and PRS is limited. For NNARS and PRS \autoref{eq:peakiness} results are given, though the distribution is not compact -- and no transformation $[0, \infty] \to [0, 1]$ is specified. Similarly for PRS results on the clutter task \autoref{eq:clutter} are given, and were obtained by artificially clipping the distribution to a range that contained all samples\footnote{This was indicated by the author when asked over email, though they do not remember the clipping value. We appreciate their responsiveness and valuable information that helped us determine replicability.}. Both methods have hyper-parameters that are not specified and would be altered by a chosen transformation, further complicating our replication of their results. Since NNARS presents the state-of-the-art results, we use their reported results as our comparison numbers though we are unable to reproduce them.

\subsection{Empirical Acceptance Rates}

Results will be presented in the same order as the problems were specified. Results also present a ``Simple Rejection Sampling'' (SRS) baseline of manually specifying $g(\cdot)$ and $C$ as reported by prior work.

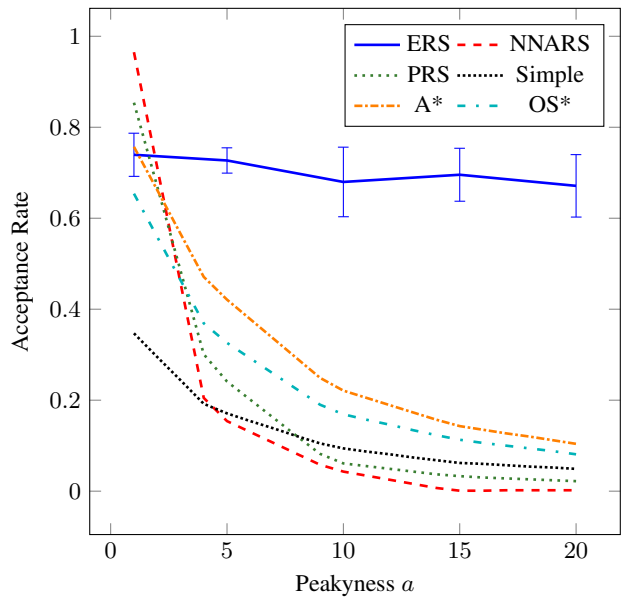
\begin{figure}[!h]
    \begin{tikzpicture}
\begin{axis}[
    error bars/y dir=both,
    xlabel=Peakyness $a$,
    ylabel=Acceptance Rate,
    legend pos=north east,
    legend columns=2,
    height=0.99\columnwidth,
    width=0.99\columnwidth,
    cycle list name=MyCyclelist
    ]
    
    \addplot+[line width=1.0pt, mark=none, solid] table [x=Dimension, y=ERS,y dir=normal,col sep=comma,y error=ERS_std] {data/peakness.csv};
    \addlegendentry{ERS}
    
    \addplot+[line width=1.0pt, mark=none, dashed] table [x=Dimension, y=NNARS, col sep=comma] {data/peakness.csv};
    \addlegendentry{NNARS}
    
    \addplot+[line width=1.0pt, mark=none, dotted] table [x=Dimension, y=PRS, col sep=comma] {data/peakness.csv};
    \addlegendentry{PRS}
    
    \addplot+[line width=1.0pt, mark=none, densely dotted] table [x=Dimension, y=Simple, col sep=comma] {data/peakness.csv};
    \addlegendentry{Simple}
    
    \addplot+[line width=1.0pt, mark=none, densely dashdotted] table [x=Dimension, y=AStar, col sep=comma] {data/peakness.csv};
    \addlegendentry{A*}
    
    \addplot+[line width=1.0pt, mark=none, loosely dashdotted] table [x=Dimension, y=OSStar, col sep=comma] {data/peakness.csv};
    \addlegendentry{OS*}
\end{axis}
\end{tikzpicture}
\caption{Results on the peakiness problem of \autoref{eq:peakiness}, where $a=1$ indicates minimal peakiness and easier samping, and $a=20$ is higher peakiness and more challenging to sample from. NNARS and PRS only perform better for $a=1$, where all approaches perform well. Our ERS suffers only a 1\% point drop in mean acceptance rate as $a$ increases at each step, where all other approaches degrade quickly. 
}
\label{fig:peaky}
\end{figure}
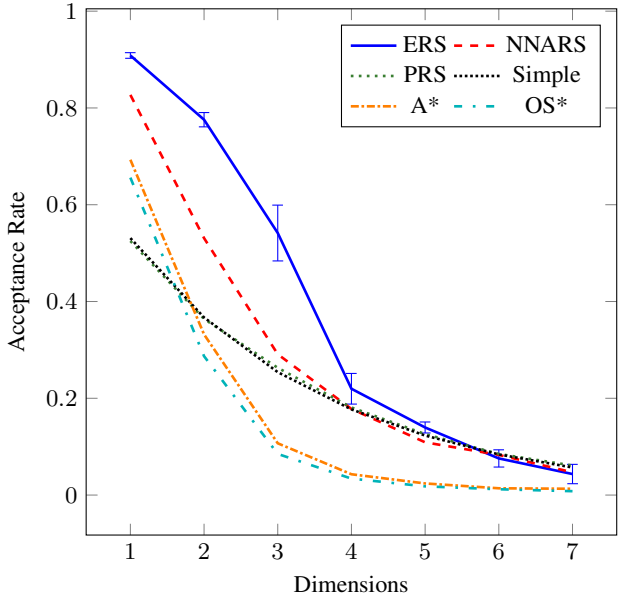
\begin{figure}
\begin{tikzpicture}
\begin{axis}[
    error bars/y dir=both,
    xlabel=Dimensions,
    ylabel=Acceptance Rate,
    legend pos=north east,
    legend columns=2,
    height=0.99\columnwidth,
    width=0.99\columnwidth,
    cycle list name=MyCyclelist
    ]
    
    \addplot+[line width=1.0pt, mark=none, solid] table [x=Dimension, y=ERS,y dir=normal,y error=ERS_std, col sep=comma] {data/dimension_scaling.csv};
    \addlegendentry{ERS}
    
    \addplot+[line width=1.0pt, mark=none, dashed] table [x=Dimension, y=NNARS, col sep=comma] {data/dimension_scaling.csv};
    \addlegendentry{NNARS}
    
    \addplot+[line width=1.0pt, mark=none, dotted] table [x=Dimension, y=PRS, col sep=comma] {data/dimension_scaling.csv};
    \addlegendentry{PRS}
    
    \addplot+[line width=1.0pt, mark=none, densely dotted] table [x=Dimension, y=Simple, col sep=comma] {data/dimension_scaling.csv};
    \addlegendentry{Simple}

    \addplot+[line width=1.0pt, mark=none, densely dashdotted] table [x=Dimension, y=AStar, col sep=comma] {data/dimension_scaling.csv};
    \addlegendentry{A*}
    
    \addplot+[line width=1.0pt, mark=none, loosely dashdotted] table [x=Dimension, y=OSStar, col sep=comma] {data/dimension_scaling.csv};
    \addlegendentry{OS*}

\end{axis}
\end{tikzpicture}
\caption{Results for \autoref{eq:scaling} as the number of dimensions $d$ increases. In this case ERS provides superior acceptance rates for low dimensionality, but suffers from the curse of dimensional similarly to prior KDE based methods --- its results becoming statistically indistinguishable as $d=7$ is reached. }
\label{fig:dimensions}
\end{figure}

Thus we start with the peakiness problem, which has historically favored the optimization \& sampling approaches of OS* and A*. Our results with standard deviation are shown in \autoref{fig:peaky}. While ERS has a lower acceptance rate for the easiest case of $a=1$, ERS is almost uninhibited by increased peakiness as $a \to 20$, which quickly turns into a large and dramatic advantage of 56.7 percentage points. This makes ERS the most robust to the challenge by a large margin, and we argue superior to PRS and NNARS that, while better for $a = 1$, quickly drop in efficacy down to an acceptance rate of 2\% and 0.2\% respectively. 

We wish to point out that ERS works on \autoref{eq:peakiness} directly, where NNARS and PRS must have applied an unknown transformation o convert \autoref{eq:peakiness} to one with finite support. We suspect that, given the same transformation, ERS is likely to have comparable or better acceptance rate in the $a = 1$ case.

We next consider the dimension scaling problem, which historically favors KDE based approaches.  The results are shown in \autoref{fig:dimensions}. Here we see a somewhat inverted behavior, where ERS dominates for $d \leq 4$, but becomes statistically equivalent to PRS and NNARS as the dimension increases. Thus we conclude that ERS has equal or better performance in all cases for this problem.

The dimension scaling and peakiness results combined are particularly significant in that ERS performs overall best for both tasks, where previously performance favored only one between two different styles of adaptive rejection samplers. That we perform well across both tasks directly speaks to the original design goal: an easy to use sampler that can be an initial solution applied to problems and obtain effective results. In particular, we imagine that running ERS for even larger $n$ can be an effective way of determining how challenging a problem may be to devise a better rejection sampler for, since it performs well with no tuning. 

\begin{wraptable}[10]{r}{0.6\columnwidth}
\vspace{-15pt}
\centering
\caption{Clutter problem \autoref{eq:clutter} acceptance rate results for 1 and 2D data, including standard deviation  ($\sigma$) of results. } \label{tbl:clutter}
\begin{tabular}{@{}lcccc@{}}
\toprule
         & ERS  & PRS  & A*   & SRS  \\ \midrule
1D       & \textbf{95.0} & 79.5 & 89.4 & 17.6 \\
$\sigma$ & 0.7  & 0.2  & 0.8  & 0.1  \\
2D       & \textbf{92.4} & 51.0 & 56.1 & $<$0.0 \\
$\sigma$ & 1.0  & 0.4  & 0.5  & $<$0.0 \\ \bottomrule
\end{tabular}
\end{wraptable}

Last we consider the clutter problem, with results presented in \autoref{tbl:clutter}. Here we see that ERS has a 5.6\% point advantage over A* in the $d=1$ case, and a larger 36.33\% advantage in the $d=2$ case. Though ERS's advantage comes at a mild increase in the variance of the results, the large gap in acceptance rate more than makes up for the variance.

\subsection{Runtime Considerations}

Our implementation uses JAX, making it easy to add acceleration as well as use current Python tools. This is relevant because ERS is the only vectorizable algorithm under consideration. Multiple samples and computations are collected concurrently, allowing the potential for acceleration. All models were benchmarked in a Google Colab instance, which provided a Tesla V100 GPU and a TPU. We consider only the A* and OS* for runtime comparisons at baseline since the author's code is available and implements the Clutter problem. This way we are directly comparing against the original implementation details.  

\begin{wraptable}[16]{r}{0.35\columnwidth}
\vspace{-10pt}
\centering
\caption{Runtime of ERS compared to the faster A* and OS* algorithms on the Clutter problem. All times reported in seconds. } \label{tbl:runtime}
\begin{tabular}{@{}cr@{}}
\toprule
Method.                      & Clutter                       \\ \midrule
ERS: CPU                     & 1148.5                                    \\
ERS: GPU                     & 634.2                                      \\
ERS: TPU                     & 1398.6                                    \\
A*                           & 2631.5                                          \\
OS*                          & 3348.4                                           \\ \bottomrule
\end{tabular}
\end{wraptable}

The results can be found in \autoref{tbl:runtime}, where we see that ERS is the fastest in terms of runtime. The acceptance rate difference does not explain the difference in runtimes. Scaling the A* runtime by the difference in acceptance rates would give a speed of 2,476.4 seconds to produce $10^5$ samples, which is still $2.16\times$ slower than ERS for $d=1$. Adding a GPU increases the speed advantage to $3.9\times$. We note that the TPU appears to be slightly slower than CPU, and is likely due to the design of TPUs to work on larger batches of data at one time for larger neural networks. 

We note that the GPU has very low utilization in our testing, in part because $10^5$ $d=1$ samples is only 6.4 MB of data to be processed, which is not enough to fully leverage the compute capacity of these devices. As such we would anticipate even larger speedups for problems that required larger sample sizes, and efficacious GPU/TPU utilization for rejection sampling is an avenue for further research. 

\subsection{Ablating Hard-Constants}

While our approach has a number of ``magic numbers'', each is set with a simple intuition e.g., only run the GMM when 50\% more items have been sampled because there must be enough new data to get a different result. We ablate these by modifying each constant with 4 values in the range of 50\% smaller to 100\% larger than specified. We then run ESR for all combinations of these constants on the peakyness task with $a=20$ as it has the most variance of any of our tests. We use the same seed for the generated samples because we want to see what the impact of these constants are, and so all runs getting the same generated sequence of samples allows us to isolate that factor. In doing so we observe a minimum acceptance rate of 75.5\% and a maximum of 75.9\%, indicating only a 0.4\% variation due to the hard constants.

\section{Conclusion} \label{sec:conclusion}

Our approach of refining a parameterized proposal distribution $g(\cdot)$ represents a new approach to defining general purpose rejection samplers. It requires fewer parameters, functions, and derivations to be specified---requiring only the target function $f(\cdot)$ to be specified, while simultaneously returning up to $7\times$ higher acceptance rates and $4\times$ lower runtime even after accounting for the difference in acceptance rates. While we do not resolve the limitations of rejection sampling to higher dimensional data, our method enables a strong and effective baseline. 

\bibliographystyle{ecai}
\bibliography{references,refsZ,exra}

\end{document}